\documentclass{cimento}

\usepackage{amsfonts}
\usepackage{amssymb}
\usepackage{dsfont}

\usepackage{xspace}
\usepackage[all]{xy}
\usepackage{amsthm}
\usepackage{amsopn}
\usepackage{footmisc}

\def\Eq#1{{\begin{equation} #1 \end{equation}}}

\def \A{\mathcal{A}}
\def \L{\mathcal{L}}
\def \S{\got S}
\def \t{\widetilde}

\def \D{\mathcal{D}}

\def \V{\mathcal{V}}
\def \Z{\mathcal{Z}}
\def \B{\mathcal{B}}

\def \Nv{\mathcal N_{\mathcal {V}}}

\def \R{\mathbb{R}}

\DeclareMathAlphabet{\Meuf}{U}{euf}{m}{n}
\def\got#1{\Meuf{#1}}
\newcommand{\LJB}{\mbox{LJB--algebra}\xspace}

\newcommand{\CA}{C^*\mbox{--algebra}\xspace}

\newtheorem{theorem}{Theorem}

\newtheorem{proposition}{Proposition}
\newtheorem{definition}{Definition}
\newtheorem{lemma}{Lemma}
\newtheorem{example}{Example}

\newcommand{\bigslant}[2]{{\left.\raisebox{.2em}{$#1$}\!\middle/\!\raisebox{-.2em}{$#2$}\right.}}

\title{Reduction of Lie--Jordan algebras: Quantum}

\author{F.~Falceto\from{ins:a}\ETC,
L.~Ferro\from{ins:b}\from{ins:c},
A.~Ibort\from{ins:b} \atque
G.~Marmo\from{ins:b}\from{ins:c}}
\instlist{\inst{ins:a} Departamento de F\'{\i}sica Te\'orica. Universidad de Zaragoza\\ Plaza San Francisco s/n, 50009 Zaragoza, Spain
  \inst{ins:b} Departamento de Matem\'aticas, Universidad Carlos III de Madrid\\ Avda. de la Universidad 30, 28911 Legan\'es, Madrid, Spain
\inst{ins:c}  Dipartimento di Scienze Fisiche, INFN--Sezione di Napoli, Universit\`a di Napoli ``Federico II'', Via Cintia Edificio 6, I--80126 Napoli, Italy
}

\PACSes{
\PACSit{03.65.Fd}{Quantum Mechanics. Algebraic methods}
\PACSit{03.65.Ca}{Quantum Mechanics. Formalism}
\PACSit{03.65.Ta}{Foundations of quantum mechanics; measurement theory}
\PACSit{02.10.Hh}{Rings and algebras}
}

\begin{document}

\maketitle

\begin{abstract} In this paper we present a theory of reduction of quantum systems in the presence of symmetries and constraints. The language used is that of Lie--Jordan Banach algebras, which are discussed in some detail together with spectrum properties and the space of states. The reduced Lie--Jordan Banach algebra is characterized together with the Dirac states on the physical algebra of observables.
\end{abstract}

\section{Introduction}  

This paper is the second part of two that jointly present a theory of reduction of Lie--Jordan algebras that can be used as an alternative to deal with symmetries and local constraints in quantum physics and quantum field theories.\\
The algebraic approach to quantum systems \cite{Ha96} has had a profound influence in both the foundations and applications of quantum physics. The background for that approach is to consider a quantum system as described by a $\CA$ $\A$ whose real part are the observables of the system, and its quantum states are normalized positive complex functionals on it. 

A geometrical approach to Quantum Mechanics \cite{Ci84},\cite{As99},\cite{Carinena:2007ws},\cite{Chruscinski:2008px} has also been developed in the last twenty years and has provided useful insight into such properties of quantum systems like integrability \cite{ClementeGallardo:2008wa}, the intrinsic nature of different measures of entanglement \cite{Grabowski:2000zk},\cite{Grabowski:2005my}, etc.

Moreover a geometrical description of dynamical systems provides a natural setting to describe symmetries and/or constraints \cite{Abr67}.  For instance, if the system carries a symplectic or Poisson structure several procedures were introduced along the years, like Marsden-Weinstein reduction, symplectic reduction, Poisson reduction, reduction of contact structures, etc. However, it was soon realized that the algebraic approach to reduction provided a convenient setting to deal with reduction of classical systems \cite{Ib97},\cite{Gr94}. 

In the standard approach to quantum mechanics, constraints are imposed on the system by selecting subspaces determined by the quantum operators corresponding to the constraints of the theory, Dirac states, and equivalence of quantum states was dealt with by using the representation theory of the corresponding group of symmetries. This paper is devoted to present a fully algebraic approach to the reduction of quantum systems.

While the first part of this contribution is focused mainly in classical mechanics, this second part is devoted to the study of the quantum case. Lie-Jordan algebras are the common algebraic language in both parts.

In the first two sections we introduce Lie--Jordan Banach algebras and discuss some useful spectral properties and the Cauchy--Schwarz inequalities. Then we will address the problem of the reduction of quantum systems as the reduction of Lie--Jordan Banach algebras. We will first consider the reduction in the presence of symmetries of the system and then in the presence of constraints. A precise characterization of the reduced algebra and an explicit description of the reduced states will be obtained. In the quantum case, the reduction procedure is more subtle than the classical one, due to the requirement imposed on the algebra to be non (Jordan) associative.
\bigskip

\section{Lie--Jordan Banach algebras}\label{second}

We recall here the definition of Lie--Jordan algebra already given in the first part of this contribution and then supplement the definition with a Banach structure. A Lie--Jordan algebra is a real vector space $\L$ equipped with the symmetric Jordan product $\circ$ and the antisymmetric Lie bracket $\left[\cdot,\cdot\right]$ satisfying 
the Jacobi identity
\begin{equation}\label{jacobi}
 \left[\,\left[\,a,b\,\right],c\,\right] + \left[\,\left[\,c,a\,\right],b\,\right] + \left[\,\left[\,b,c\,\right],a\,\right] = 0.
\end{equation}
the Leibniz's identity
\begin{equation}\label{leibniz}
 \left[a,b\circ c\right] = \left[a,b\right]\circ c + b\circ \left[a,c\right],
\end{equation}
and the associator identity
\begin{equation}\label{associator}
 (a\circ b)\circ c - a \circ (b \circ c) = \hbar^2 \left[\,b,\left[\,c,a\,\right]\,\right] ,
\end{equation}
for some $\hbar \in \R$.

Combining (\ref{leibniz}) and (\ref{associator})  
we obtain the weak associative property of Jordan algebras
\begin{equation}\label{jassociativity}
(a^2 \circ b) \circ a = a^2 \circ (b \circ a). 
\end{equation}
\begin{definition}[Lie--Jordan Banach algebra]
A Lie--Jordan algebra $\L$, complete with respect to a norm $\| \cdot \|$ that satisfies:
\begin{enumerate}
\item $\| a \circ b\| \leq \|a\|\ \|b\|$
\item $\|[a, b]\| \leq |\hbar|^{-1} \|a\|\ \|b\|$ 
\item $\| a^2 \| = \| a\|^2$
\item $\|a^2\| \leq \|a^2 + b^2\|$,
\end{enumerate}
forall$,a,b \in \L$
is called a Lie--Jordan--Banach algebra, \LJB for short.
\end{definition}
Notice that if we are given a LJB--algebra $\L$, by taking combinations of the two products we can define an associative product on the complexification $\L^\mathbb{C} = \L \oplus i \L$.  
Specifically, we define:
\begin{equation}
  x y = x\circ y - i \hbar \left[x,y\right].
\end{equation}
Such associative algebra equipped with the involution $x^* = (a + ib)^* = a - ib$ and the norm $\|x\| = \|x^*x\|^{1/2}$ is the unique $\CA$ whose real part is precisely $\L$ \cite{FFIM13}.

\bigskip

\section{Spectrum and states of Lie--Jordan Banach algebras}

\begin{definition}
 Let $\L$ be a \LJB. The spectrum $\sigma(a)$ of $a \in \L$ is defined as the set of those $z \in \R$ for which $a - z\mathds{1}$ has no inverse in $\L$. 
\end{definition}
Note that a \LJB $\L$ is a complete order unit space with respect to the positive cone 
\begin{equation}\label{positive cone}
 \L^+ = \{\,a^2 \mid a\in \L\,\}
\end{equation}
or equivalently an element is positive if its spectrum is positive.\\
We shall in this section prove some useful properties of the spectrum and then the Cauchy--Schwarz like inequalities.
\begin{lemma}
 $$\sigma(a_1^2 + a_2^2 + \lambda[a_1,a_2]) \cup \{0\} = \sigma(a_1^2 + a_2^2 - \lambda[a_1,a_2]) \cup \{0\}$$ $\forall\, a_1,a_2 \in \L$ and $\forall\, \lambda \in \R$.
\end{lemma}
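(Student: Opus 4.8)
The plan is to pass to the enveloping $\CA$ $\A:=\L^{\mathbb{C}}$ and to reduce the statement to the elementary identity $\sigma_\A(xy)\cup\{0\}=\sigma_\A(yx)\cup\{0\}$, valid in any unital Banach algebra (for $\zeta\ne 0$, $\zeta\mathds{1}-xy$ is invertible exactly when $\zeta\mathds{1}-yx$ is, with inverse $\zeta^{-1}\mathds{1}+\zeta^{-1}y(\zeta\mathds{1}-xy)^{-1}x$). First I would note that for a self-adjoint $h\in\L$ and $z\in\R$ the element $h-z\mathds{1}$ is invertible in $\L$ iff it is invertible in $\A$ — the inverse of a self-adjoint element is self-adjoint — so the spectrum $\sigma(h)$ of the definition above coincides with the $\CA$ spectrum $\sigma_\A(h)$; since $a_1^2+a_2^2\pm\lambda[a_1,a_2]$ are self-adjoint, it suffices to argue with $\sigma_\A$. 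The problem then becomes to exhibit the two elements as $xy$ and $yx$ for suitable $x,y\in\A$, the natural candidates being $z^*z$ and $zz^*$.

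For one distinguished value of the parameter this is immediate. Take $z=a_1+ia_2\in\A$, so that $z^*=a_1-ia_2$ (using $a_i^*=a_i$), and expand in $\A$ using only $uv=u\circ v-i\hbar[u,v]$, hence $uv-vu=-2i\hbar[u,v]$ for $u,v\in\L$:
\[ z^*z = a_1^2+a_2^2+i(a_1a_2-a_2a_1) = a_1^2+a_2^2+2\hbar[a_1,a_2], \]
\[ zz^* = a_1^2+a_2^2-i(a_1a_2-a_2a_1) = a_1^2+a_2^2-2\hbar[a_1,a_2]. \]
Applying $\sigma_\A(z^*z)\cup\{0\}=\sigma_\A(zz^*)\cup\{0\}$ yields the statement for $\lambda=2\hbar$, and the evident symmetry $\lambda\mapsto-\lambda$ of the assertion covers $\lambda=-2\hbar$.

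The step I expect to be the genuine obstacle is reaching an arbitrary $\lambda\in\R$. The naive fix of replacing $z$ by $a_1+ita_2$ fails: it gives $z^*z=a_1^2+t^2a_2^2+2t\hbar[a_1,a_2]$, so forcing the quadratic part to be exactly $a_1^2+a_2^2$ forces the bracket coefficient to be $\pm 2\hbar$, and the same obstruction defeats every $z$ depending linearly on $a_1,a_2$. Getting past it appears to require the structure essentially: for instance realizing the two elements as $WW^*$ and $W^*W$ for some $W$ in a matrix amplification $M_n(\A)$, or implementing the sign flip $[a_1,a_2]\mapsto-[a_1,a_2]$ at fixed $a_1^2+a_2^2$ by a $*$-automorphism of a $\CA$ containing $C^*(a_1,a_2)$ and then pushing the spectral identity back down via spectral permanence. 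Carrying such an argument through for all $\lambda$, while respecting the positive-cone structure of a \LJB, is where I would expect the difficulty to concentrate.
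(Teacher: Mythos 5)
Your reduction to the enveloping \CA and to the identity $\sigma_{\A}(xy)\cup\{0\}=\sigma_{\A}(yx)\cup\{0\}$ is precisely the mechanism of the paper's own proof: the displayed inverse there is the Lie--Jordan transcription of $z^{-1}\left(y\,b\,x-\mathds{1}\right)$ with $y=a_1+ia_2$, $x=y^{*}$, $b=(xy-z\mathds{1})^{-1}$, using $a_jba_j=2a_j\circ(b\circ a_j)-a_j^{2}\circ b$ and rewriting $a_1ba_2-a_2ba_1$ through $\circ$ and $[\cdot,\cdot]$. Note that the paper's formula carries no $\lambda$ in its coefficients, which is consistent with its really being the $\lambda=2\hbar$ computation. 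So for $\lambda=\pm2\hbar$ your argument and the paper's coincide.

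The obstruction you flag for general $\lambda$ is, however, not a gap in your method but a flaw in the statement: the lemma is \emph{false} for $\lambda\notin\{0,\pm2\hbar\}$, so no matrix amplification or automorphism trick can close it. Take $\L=M_3(\C)_{sa}$ and put $A=a_1^{2}+a_2^{2}$, $B=[a_1,a_2]$, $P_\lambda=A+\lambda B$, $Q_\lambda=A-\lambda B$. Since $\mathrm{tr}\,B=0$ and $\mathrm{tr}(AB)=0$ by cyclicity, $P_\lambda$ and $Q_\lambda$ always share trace and second elementary symmetric function of the eigenvalues, while $\det P_\lambda-\det Q_\lambda=2\bigl(\lambda\,\mathrm{tr}(\mathrm{adj}(A)B)+\lambda^{3}\det B\bigr)$ is an odd cubic in $\lambda$ that vanishes at $\lambda=\pm2\hbar$ (there $P_\lambda=w^{*}w$, $Q_\lambda=ww^{*}$ with $w=a_1+ia_2$); hence it equals $2\det(B)\,\lambda(\lambda^{2}-4\hbar^{2})$. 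Whenever $\det[a_1,a_2]\neq0$ --- which holds generically, e.g.\ for $a_1=\mathrm{diag}(1,2,3)$ and $a_2$ with zero diagonal and upper entries $1,-i,1$ --- the two characteristic polynomials, and then (given equal trace and second symmetric function) the two spectra even as sets, differ for every $\lambda\neq0,\pm2\hbar$; since $A>0$, for small such $\lambda$ neither spectrum contains $0$, so the union with $\{0\}$ does not repair the equality. A cruder symptom that the quantifier on $\lambda$ is wrong: Thm.~\ref{posa*a} already fails for Pauli matrices when $|\lambda|>2|\hbar|$, since $\sigma_x^{2}+\sigma_y^{2}-\lambda[\sigma_x,\sigma_y]=2\mathds{1}+(\lambda/\hbar)\sigma_z$ has the negative eigenvalue $2-|\lambda|/|\hbar|$. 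The statements are correct for $|\lambda|\le2|\hbar|$ (positivity) and $\lambda=\pm2\hbar$ (the present lemma), and only $\lambda=2\hbar$ is used downstream in the Cauchy--Schwarz inequalities; so the case you actually proved is all that the paper needs.
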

\begin{proof}
 For $z \neq 0$ the invertibility of $a_1^2 + a_2^2 + \lambda[a_1,a_2] - z\mathds{1}$ implies the invertibility of $a_1^2 + a_2^2 - \lambda[a_1,a_2] - z\mathds{1}$. Namely, one computes that
\begin{eqnarray*}
 (a_1^2 + a_2^2 + \lambda[a_1,a_2] - z\mathds{1})^{-1} &=& z^{-1} \{2a_1 \circ (b \circ a_1) - a_1^2 \circ b + 2 a_2 \circ (b \circ a_2) +
\\ && -a_2^2 \circ b + 2 [a_1,b\circ a_2] + 2 a_1 \circ [b,a_2] -\mathds{1}\}
\end{eqnarray*}
with $b = \{a_1^2 + a_2^2 + \lambda[a_1,a_2] - z\mathds{1}\}^{-1}$.
\end{proof}
\begin{lemma}
 If the spectrum $\sigma(a_1^2 + a_2^2 + \lambda[a_1,a_2])$ is negative, then $a_1^2 + a_2^2 + \lambda[a_1,a_2] = 0$
$\forall\, a_1,a_2 \in \L$ and $\forall\, \lambda \in \R$.
\end{lemma}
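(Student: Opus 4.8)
The plan is to reduce everything to the order-unit structure of $\L$ together with the previous lemma. Put $c = a_1^2 + a_2^2 + \lambda[a_1,a_2]$ and $c' = a_1^2 + a_2^2 - \lambda[a_1,a_2]$, so that $c + c' = 2(a_1^2 + a_2^2)$ is a sum of squares.

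First I would turn the spectral hypothesis into an order inequality. Since $\L$ is a complete order unit space whose positive cone is exactly the set of elements with positive spectrum, and since $\sigma(-x) = -\sigma(x)$, the statement that $\sigma(c)$ is negative is equivalent to $c \le 0$. By the previous lemma,
\[
\sigma(c') \subseteq \sigma(c') \cup \{0\} = \sigma(c) \cup \{0\} \subseteq (-\infty,0],
\]
hence $c' \le 0$ as well. On the other hand $c + c' = 2(a_1^2 + a_2^2) \in \L^+$, so $c + c' \ge 0$.

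Finally I would combine these three facts. Adding $c \le 0$ and $c' \le 0$ gives $c + c' \le 0$; since the cone $\L^+$ of an order unit space is proper, this together with $c + c' \ge 0$ forces $c + c' = 0$. Then $c = -c' \ge 0$, which with $c \le 0$ yields $c = 0$ (and in particular $a_1^2 + a_2^2 = 0$ too). There is no serious obstacle here: the only points that need a little care are the dictionary between spectrum and order — an element is negative precisely when its spectrum is — and the remark that the previous lemma is genuinely indispensable, because the order inequalities by themselves only give $c' \ge 0$, and it is exactly the symmetry $\sigma(c)\cup\{0\} = \sigma(c')\cup\{0\}$ that supplies the complementary bound $c' \le 0$.
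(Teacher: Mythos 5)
Your proof is correct and rests on exactly the same two ingredients as the paper's: the identity $c + c' = 2(a_1^2 + a_2^2)$ and the preceding lemma, combined with the dictionary between spectrum and order; the only (cosmetic) difference is the direction in which the lemma is applied — the paper derives $\sigma(c')\subset\R^+$ from the identity and transfers it back to conclude $\sigma(c)=\{0\}$, whereas you push the negativity of $\sigma(c)$ onto $c'$ and finish with the order argument $c+c'\ge 0$, $c+c'\le 0$. So this is essentially the paper's proof in mirror image, and it is sound.
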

\begin{proof}
 Note that $a_1^2 + a_2^2 - \lambda[a_1,a_2] = 2 a_1^2 + 2 a_2^2 - (a_1^2 + a_2^2 + \lambda[a_1,a_2])$ and then under the assumptions of the lemma $\sigma(a_1^2 + a_2^2 - \lambda[a_1,a_2]) \subset \R^+$. This implies, using the previous lemma, that $\sigma(a_1^2 + a_2^2 + \lambda[a_1,a_2]) = \{0\}$ and therefore it is 0.
\end{proof}
\begin{theorem}\label{posa*a}
 $$X = a_1^2 + a_2^2 - \lambda[a_1,a_2] \in \L^+$$
$\forall\, a_1,a_2 \in \L$ and $\forall\, \lambda \in \R$.
\end{theorem}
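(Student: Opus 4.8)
The plan is to transcribe into the Jordan language the classical proof that $a^{*}a\ge 0$ in a \CA, letting Lemma~1 play the role of the identity $\sigma(ab)\cup\{0\}=\sigma(ba)\cup\{0\}$ and Lemma~2 that of the fact that a positive element with negative spectrum must vanish. Since $X=a_1^2+a_2^2-\lambda[a_1,a_2]$ lies in the order unit space $\L$, proving $X\in\L^+$ is the same as proving $\sigma(X)\subseteq\R^+$, so I would argue by contradiction and assume that $\sigma(X)$ meets $(-\infty,0)$.

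First I would apply the functional calculus in $\L$ --- each element generates, together with $\mathds{1}$, an associative (hence commutative) Jordan subalgebra isomorphic to $C(\sigma(X))$ --- to split $X=X_+-X_-$ with $X_\pm\in\L^+$, $X_+\circ X_-=0$, and, by the contradiction hypothesis, $X_-\ne 0$. The decisive step is to compress $X$ by $X_-$ through the quadratic map $U_{X_-}(a)=2X_-\circ(X_-\circ a)-X_-^2\circ a$, which on the complexification $\L^{\mathbb C}$ is just $a\mapsto X_-\,a\,X_-$. Because $X_+$ and $X_-$ lie in a common commutative subalgebra with $X_+\circ X_-=0$, one has $U_{X_-}(X_+)=0$, while $U_{X_-}(X_-)=X_-^3$ by power associativity; since $U_{X_-}$ is linear in its argument, $Z:=U_{X_-}(X)=-X_-^3$, hence $\sigma(Z)\subseteq(-\infty,0]$.

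Next I would recognise $Z$ as one half of a pair of the form appearing in Lemma~1. Writing $X=yy^{*}$ with $y=a_1+ia_2\in\L^{\mathbb C}$, setting $w:=X_-\,y$, and putting $b_1:=X_-\circ a_1+\hbar[X_-,a_2]$ and $b_2:=X_-\circ a_2-\hbar[X_-,a_1]$ in $\L$, so that $w=b_1+ib_2$, one computes $Z=ww^{*}=b_1^2+b_2^2-2\hbar[b_1,b_2]$ and $w^{*}w=b_1^2+b_2^2+2\hbar[b_1,b_2]$. By Lemma~1, $\sigma(b_1^2+b_2^2+2\hbar[b_1,b_2])\subseteq\sigma(Z)\cup\{0\}\subseteq(-\infty,0]$, so Lemma~2 forces $b_1^2+b_2^2+2\hbar[b_1,b_2]=0$. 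Substituting back gives $Z=2(b_1^2+b_2^2)\in\L^+$; together with $\sigma(Z)\subseteq(-\infty,0]$ this yields $Z=0$, hence $X_-^3=0$, hence $\sigma(X_-)=\{0\}$, hence $X_-=0$ (its norm equals its spectral radius) --- contradicting $X_-\ne 0$. Therefore $\sigma(X)\subseteq\R^+$, that is, $X\in\L^+$.

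I expect the main work to be in two places: making the functional-calculus splitting $X=X_+-X_-$ with orthogonal positive parts fully rigorous in the Lie--Jordan--Banach setting, and verifying the compression identity $U_{X_-}(X)=-X_-^3$, whose only non-trivial ingredient is that $X_+\circ X_-=0$ forces $U_{X_-}(X_+)=0$ for positive $X_\pm$. The point where the ``quantum'' subtlety enters --- as announced in the introduction --- is that $\L^{\mathbb C}$ is now a noncommutative \CA, which is exactly what makes the $ww^{*}$ versus $w^{*}w$ alternative of Lemma~1 non-vacuous; granting the two facts just mentioned, Lemmas~1 and 2 close the argument immediately.
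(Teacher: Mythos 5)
Your proposal follows essentially the same route as the paper's own proof. The authors likewise invoke the continuous functional calculus to split $X$ into orthogonal positive and negative parts, multiply through by the negative part so that $\pm X_-^3$ is again an element of the form $b_1^2+b_2^2\mp\lambda[b_1,b_2]$ (their $b_1=a_1\circ X_-+\lambda[a_2,X_-]$, $b_2=\lambda[a_1,X_-]+a_2\circ X_-$ play exactly the role of the real and imaginary parts of your $w=X_-\,y$, up to sign and constant conventions), and then use Lemma~1 together with the ``negative spectrum implies zero'' Lemma~2 to force $X_-^3=0$, hence $X_-=0$ and $X=X_+\in\L^+$. The two background facts you single out as needing care --- the orthogonal decomposition in the \LJB setting and $U_{X_-}(X_+)=0$ from $X_+\circ X_-=0$ with $X_\pm$ positive --- are taken for granted in the paper as well (its decomposition is even written with a sign slip, $X=X_++X_-$ with both parts in $\L^+$).

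The one genuine discrepancy is the range of $\lambda$. Writing $X=yy^*$ with $y=a_1+ia_2$ silently fixes $\lambda=2\hbar$, since in the complexification $yy^*=a_1^2+a_2^2-2\hbar[a_1,a_2]$ and $y^*y=a_1^2+a_2^2+2\hbar[a_1,a_2]$; so as written your argument proves the theorem only for $\lambda=\pm 2\hbar$, while the statement (and the paper's $\lambda$-general choice of $b_1,b_2$) claims every $\lambda\in\R$. You can extend your version to $|\lambda|\le 2|\hbar|$ cheaply: with $t=\lambda/2\hbar$ one has $X=\bigl(a_1^2+(ta_2)^2-2\hbar[a_1,ta_2]\bigr)+(1-t^2)a_2^2$, a sum of two elements of $\L^+$, and the positive cone of an order unit space is convex. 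But the full claim cannot be recovered, because it is false for $|\lambda|>2|\hbar|$: in the self-adjoint $2\times 2$ matrices with the canonical \LJB structure, $a_1=\sigma_x$, $a_2=\sigma_y$ give $[a_1,a_2]=-\hbar^{-1}\sigma_z$, so $X=2\mathds{1}+(\lambda/\hbar)\sigma_z$ has a negative eigenvalue once $|\lambda|>2|\hbar|$. So the ``$\forall\,\lambda\in\R$'' in the statement (and hence the paper's $\lambda$-general algebra for $X_-^3$) overreaches; the only instance actually used later, in the Cauchy--Schwarz inequality (\ref{CSL}), is $\lambda=\pm2\hbar$, which is precisely what your proof delivers. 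It would strengthen your write-up to state this restriction explicitly rather than leave the general-$\lambda$ claim apparently covered.
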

\begin{proof}
 By applying the continuous functional calculus, it is well known that every $X \in \L$ has the decomposition $X = X_+ + X_-$ \cite{La98},\cite{Conway}, where $X_+,X_- \in \L^+$ and $X_+\circ X_- = [X_+,X_-] = 0$. It follows that $X_-^3 = -(b_1^2 + b_2^2 - \lambda[b_1,b_2]) \geq 0$ with $b_1 = a_1 \circ X_- + \lambda[a_2,X_-]$ and $b_2 = \lambda[a_1,X_-]+a_2\circ X_-$. But $X_-^3 = -2 b_1^2 - 2 b_2^2 + (b_1^2 + b_2^2 + \lambda [b_1,b_2])$ which is a negative quantity and then in turn implies that $X_-=0$ and then $X = X_+ \geq 0$.
\end{proof}

The space of states $\S(\L)$ of a Jordan--Banach algebra consists of all real normalized positive linear functionals on $\L$, i.e. \begin{equation}
 \rho\colon \L \to \mathbb{R}                                                                                                                                          
\end{equation}
 such that $\rho(\mathds{1}) = 1$ and $\rho(a^2) \geq 0, \ \forall\,a \in \L$. The state space is convex and compact with respect to the $\mathrm{w}^*$--topology. We shall now prove the Lie--Jordan algebra version of the Cauchy--Schwarz inequalities.
\begin{theorem}
Let $\L$ be a $\LJB$ and $\rho$ a state on $\L$. Then if $a,b \in \L$ we have
 \begin{equation}\label{CSJ}
 \rho(a\circ b)^2 \leq \rho(a^2)\rho(b^2),
\end{equation}
and
\begin{equation}\label{CSL}
 \rho(\left[a,b\right])^2 \leq \frac{1}{\hbar^2} \rho(a^2)\rho(b^2).
\end{equation}
\end{theorem}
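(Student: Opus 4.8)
The plan is to prove both inequalities by the classical quadratic--form (discriminant) argument: exhibit a one--parameter family of positive elements, apply the positive normalised functional $\rho$, and impose non--negativity of the resulting real quadratic. The ingredients are the description of $\L^{+}$ as the set of squares --- on which $\rho$ is by definition non--negative --- together with Theorem~\ref{posa*a}.

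Inequality (\ref{CSJ}) I would obtain directly. Since $\circ$ is commutative, for every $\lambda\in\R$ the element $(a-\lambda b)\circ(a-\lambda b)=a^{2}-2\lambda\,(a\circ b)+\lambda^{2}b^{2}$ lies in $\L^{+}$, so $\rho$ gives $\lambda^{2}\rho(b^{2})-2\lambda\,\rho(a\circ b)+\rho(a^{2})\ge 0$ for all $\lambda\in\R$. If $\rho(b^{2})>0$, minimising the left--hand side at $\lambda=\rho(a\circ b)/\rho(b^{2})$ yields $\rho(a\circ b)^{2}\le\rho(a^{2})\rho(b^{2})$; if $\rho(b^{2})=0$, the surviving affine function of $\lambda$ is non--negative everywhere only if $\rho(a\circ b)=0$, so the inequality is trivial. (Equivalently, replace $\rho(b^{2})$ by $\rho(b^{2})+\varepsilon$ throughout and let $\varepsilon\downarrow 0$.)

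For (\ref{CSL}) I would feed $\rho$ with the positive elements furnished by Theorem~\ref{posa*a}: taking $a_{1}=a$ and $a_{2}=tb$ there gives $a^{2}+t^{2}b^{2}-\lambda t\,[a,b]\in\L^{+}$, hence $t^{2}\rho(b^{2})-\lambda t\,\rho([a,b])+\rho(a^{2})\ge 0$. I would then take $\lambda$ at the largest modulus Theorem~\ref{posa*a} permits --- namely $2|\hbar|$ --- and with the sign that makes $\lambda t\,\rho([a,b])\ge 0$, so the cross term becomes $-\,2|\hbar|\,|t|\,|\rho([a,b])|$; restricting to $t\ge 0$, the estimate $\rho(b^{2})t^{2}-2|\hbar|\,|\rho([a,b])|\,t+\rho(a^{2})\ge 0$ holds for all $t\ge 0$, and requiring this quadratic to stay non--negative (with $\rho(b^{2})=0$ handled exactly as before) gives precisely $\rho([a,b])^{2}\le\hbar^{-2}\rho(a^{2})\rho(b^{2})$.

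The step I expect to be the real obstacle is this last choice of parameter: in Theorem~\ref{posa*a} the bracket term may dominate $a^{2}+t^{2}b^{2}$ only by a factor controlled by $|\hbar|$, and it is exactly that bound --- and not an unrestricted one --- that both produces the factor $\hbar^{-2}$ in (\ref{CSL}) and prevents the conclusion from degenerating to $\rho([a,b])=0$; keeping this bookkeeping straight is the crux. A cleaner route I would also record sidesteps it: pass to the $C^{*}$--algebra $\A=\L^{\mathbb{C}}$ whose real part is $\L$, extend $\rho$ to a state $\widetilde\rho$ on $\A$ (an extension that is itself immediate from Theorem~\ref{posa*a}), and apply the ordinary $C^{*}$ Cauchy--Schwarz inequality to the self--adjoint $a,b$; since $ab=a\circ b-i\hbar[a,b]$ with $a\circ b,[a,b]\in\L$, this gives $\rho(a\circ b)^{2}+\hbar^{2}\rho([a,b])^{2}=|\widetilde\rho(ab)|^{2}\le\widetilde\rho(a^{2})\widetilde\rho(b^{2})=\rho(a^{2})\rho(b^{2})$, which contains (\ref{CSJ}) and (\ref{CSL}) simultaneously.
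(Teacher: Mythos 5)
Your proof is correct and takes essentially the same route as the paper: inequality (\ref{CSJ}) via the discriminant argument applied to $\rho\bigl((\lambda a+b)^2\bigr)\ge 0$, and inequality (\ref{CSL}) via the same quadratic argument applied to the positive element $a^2+b^2\pm 2\hbar\,[a,b]$ supplied by Theorem~\ref{posa*a}. The $C^*$--algebraic shortcut you record at the end is a sound alternative, but your main argument coincides with the paper's proof.
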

\begin{proof}
 Let $\lambda \in \R$, then we have
$$0 \leq \rho((\lambda a + b)^2) = \lambda^2 \rho(a^2) + 2 \lambda \rho(a \circ b) + \rho(b^2).$$
If $\rho(a^2) = 0$ then $\rho(a\circ b) = 0$ since $\lambda$ is arbitrary. If $\rho(a^2) \neq 0$, let $\lambda = - \rho(a \circ b) \rho(a^2)^{-1}$, and the first proof is immediate.\\
The second inequality is proved similarly by using the positivity of $a^2 + b^2 + 2\hbar[a,b]$ as stated in Thm. (\ref{posa*a}).
\end{proof}

\begin{example}
The self-adjoint subalgebra $\B_{sa}$ of the algebra $\B(\mathcal H)$ of bounded linear operators on a Hilbert space $\mathcal H$ with the operator norm is a Lie--Jordan Banach algebra and the states are the positive linear functional on $\B_{sa}$. Let $\varphi$ be a continuous state with respect to the ultrastrong topology on $\B(\mathcal H)$, then there is a positive linear trace class operator $\hat\rho \in \B_{sa}$ such that
\begin{equation}
 \varphi(a) = \mathrm{Tr}(\hat\rho a)
\end{equation}
for all $a \in \B_{sa}$.\\ Conversely, if $\hat\rho$ is a positive trace class operator, then the functional $a \to \mathrm{Tr}(\hat\rho a)$ defines an ultrastrongly continuous positive linear functional on $\B_{sa}$.

\end{example}

\bigskip

\section{Reduction by symmetries}

We can consider a symmetry associated to a Lie group $\mathcal G$ with Lie algebra $\mathfrak{g}$ as a map
\Eq{\hat g\colon \mathcal{G} \to \mathrm{Aut}(\L,\circ)}
which assigns to each element $g$ of the group, an automorphism $U(g)$ of the Lie--Jordan algebra algebra $\L$. Let $\xi \in \mathfrak{g}$, then the infinitesimal generator of the symmetry is defined as
\Eq{\hat\xi(a) = \left.\frac{d}{ds} U(\exp^{s\xi})(a)\right|_{s=0}}
and is a Jordan derivation, in fact $\forall\, a,b \in \L$
\Eq{\hat\xi(a \circ b) = \hat\xi(a) \circ b + a \circ \hat\xi(b)}
i.e. $\hat\xi \in \mathrm{Der}(\L,\circ)$. In addition $\hat\xi$ is a skew derivation \cite{Al98},\cite{FFIM13} since it preserves the positive cone of the algebra $\hat\xi(\L^+) \subset \L^+$ and hence, by a theorem on Jordan derivations \cite{FFIM13}, there exists $J \in \L$ such that $\forall\, a \in \L$, 
  \Eq{\hat\xi(a) = [J,a].}
It follows that $\hat\xi$ is also a Lie derivation.\\
Then if $\D$ is a set of derivations describing the symmetries of the $\LJB$, we can introduce the subspace
\Eq{\mathcal F_{\D} = \{ a \in \L \mid \hat\xi(a) = 0\quad \forall\, \xi \in \D\}}
which is easily seen to be the Lie--Jordan subalgebra representing the symmetrical physical algebra of observables.\\
The states on the reduced algebra are simply given by the restriction of the states of the full algebra on the subalgebra:
\begin{equation}
 \S(\mathcal F_{\D}) = \{\left.\rho\right|_{\mathcal F_{\D}}\ \mathrm{s.t.}\ \rho \in \S(\L)\}.
\end{equation}

\begin{example}
 A simple example is the quantum reduction of a free particle in $\R^3$ to the $S^2$ sphere. Consider the formal Lie--Jordan algebra $\L$ generated by the operators $\{\hat r,\hat{p_r},\hat\theta,\hat{p_\theta},\hat\phi,\hat{p_\phi}\}$ and the symmetries $\D = \{[\hat r,\cdot],[\hat{p_r},\cdot]\}$. Then the reduced Lie--Jordan algebra according to the above prescription is
$$\mathcal F_{\D} = \{ a \in \L \mid [\hat r,a],[\hat{p_r},a]=0\}$$
that is the algebra of observables of a particle on the sphere which is generated by $\{\hat\theta,\hat{p_\theta},\hat\phi,\hat{p_\phi}\}$.
\end{example}

\bigskip

\section{Reduction by constraints}\label{LJ red}

A quantum system with constraints is a pair $(\L,\mathcal C)$ where the field algebra $(\L,\circ,\left[\cdot,\cdot\right])$ is a unital $\LJB$ containing the constraint set 
$\mathcal C$ \cite{Grundling:1984sq},\cite{Grundling:1998zn}.
The constraints select the physical state space, also called Dirac states
\begin{equation}
 \S_D = \{\,\omega \in \S(\L) \mid \omega(c^2) = 0,\quad\forall\,c \in \mathcal C \,\}
\end{equation}
where $\S(\L)$ is the state space of $\L$.
We define the vanishing subalgebra $\V$ as:
\begin{equation}
 \V = \{\,a \in \L \mid \omega(a^2) = 0,\quad\forall\,\omega \in \S_D \,\}.
\end{equation}
\begin{proposition}
 $\V$ is a non-unital LJB--subalgebra.
\end{proposition}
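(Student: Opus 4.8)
The plan is to establish the three claims bundled in the statement — that $\V$ is a subalgebra, that it is complete for the inherited norm (hence itself an \LJB), and that it contains no unit — by transporting everything to the associative $\CA$ envelope $\A=\L^{\C}$ and identifying $\V$ with the self-adjoint part of an intersection of left kernels of states.

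First I would extend each Dirac state $\omega\in\S_D$ to a state $\omega^{\C}$ on $\A$ via $\omega^{\C}(a+ib)=\omega(a)+i\omega(b)$. This is $\C$--linear with $\omega^{\C}(\mathds 1)=1$, and it is positive: for $x=a+ib$ one has $x^*x=a^2+b^2+2\hbar[a,b]$, which lies in $\L^+$ by Theorem~\ref{posa*a} with $\lambda=-2\hbar$, so $\omega^{\C}(x^*x)=\omega(a^2+b^2+2\hbar[a,b])\ge 0$. Since a self-adjoint $a$ satisfies $a^*a=a\circ a-i\hbar[a,a]=a^2=aa^*$, the defining condition $\omega(a^2)=0$ is the same as $\omega^{\C}(a^*a)=0$; writing $\I_\omega=\{x\in\A:\omega^{\C}(x^*x)=0\}$ for the left kernel of $\omega^{\C}$, an element $a\in\L$ therefore lies in $\V$ iff $a\in\I_\omega$ for all $\omega\in\S_D$, and by the same identity $a^*a=aa^*$ also in the right kernel $\I_\omega^{*}$. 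Setting $\I=\bigcap_{\omega\in\S_D}\I_\omega$ and $\J=\I\cap\I^{*}$, this yields $\V=\J\cap\L$.

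Next I would use the standard $C^*$--facts. Each $\I_\omega$ is norm-closed, because $x\mapsto\omega^{\C}(x^*x)^{1/2}$ is a seminorm bounded by $\|x\|$; and it is a left ideal, because $0\le x^*c^*cx\le\|c\|^2\,x^*x$ forces $\omega^{\C}((cx)^*(cx))\le\|c\|^2\,\omega^{\C}(x^*x)$. Hence $\I$ is a closed left ideal, $\I^{*}$ a closed right ideal, and $\J=\I\cap\I^{*}$ a norm-closed, $*$--invariant associative subalgebra of $\A$: if $x,y\in\J$ then $xy\in\I$ ($y\in\I$, left ideal) and $xy\in\I^{*}$ ($x\in\I^{*}$, right ideal), so $xy\in\J$. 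Thus $\V=\J\cap\L$ is precisely the self-adjoint part of the $*$--subalgebra $\J$.

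Finally I would read off the result. For $a,b\in\V$ the elements $a\circ b=\frac{1}{2}(ab+ba)$ and $[a,b]=\frac{1}{2i\hbar}(ba-ab)$ belong to $\J$ (a $*$--invariant subalgebra) and are self-adjoint, hence lie in $\V$, so $\V$ is a Lie--Jordan subalgebra. It satisfies the four norm axioms because these are pointwise conditions already valid on all of $\L$, and it is complete because it is norm-closed in $\L$; hence $\V$ is an \LJB. It is non-unital: $\S_D$ being non-empty, any $\omega\in\S_D$ gives $\omega(\mathds 1^{2})=\omega(\mathds 1)=1\neq 0$, so $\mathds 1\notin\V$. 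The step I expect to be the genuine obstacle is the closure of $\V$ under the two products: an intrinsic argument using only the Cauchy--Schwarz inequalities of the previous section runs into a circular interdependence between Jordan-closure and Lie-closure, and the passage to the associative envelope — where the ideal property is automatic — is the clean way to break it.
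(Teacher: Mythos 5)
Your proof is correct, but it takes a genuinely different route from the paper. You transport everything to the enveloping $\CA$ $\L^{\C}$ (whose existence the paper records at the end of Section 2), extend each Dirac state to a state $\omega^{\C}$ — the positivity check via $x^*x=a^2+b^2+2\hbar[a,b]$ and Theorem~\ref{posa*a} is exactly right — and then invoke the standard GNS-type fact that the left kernel $\I_\omega$ of a state is a closed left ideal, so that $\V$ becomes the self-adjoint part of the closed $*$-invariant subalgebra $\bigcap_\omega(\I_\omega\cap\I_\omega^*)$, from which closure under $\circ$ and $[\cdot,\cdot]$, completeness, and $\mathds{1}\notin\V$ all follow. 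The paper instead stays entirely inside the Lie--Jordan framework: it uses the associator identity (\ref{associator}) to write $(a\circ b)^2=\hbar^2[b,[a\circ b,a]]+a\circ(b\circ(a\circ b))$ and $(a\circ b)^2-\hbar^2[a,b]^2=a\circ(b\circ(a\circ b))-\hbar^2 a\circ[b,[a,b]]$, and then kills every term with the Cauchy--Schwarz inequalities (\ref{CSJ})--(\ref{CSL}), since each summand has an outer factor $a$ or $b$ lying in $\V$. So your closing speculation is off the mark: the intrinsic argument does not get trapped in a circular interdependence between Jordan- and Lie-closure; the associator identity is precisely the tool that decouples them, and this self-containedness (no passage to the associative envelope) is rather the point of the Lie--Jordan formulation. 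What your route buys is economy once the $C^*$ machinery is granted — the ideal property of left kernels is automatic and simultaneously prepares the identification of $\V$ inside $\Nc$-type objects used later — at the price of importing the complexification and standard $C^*$ facts; the paper's route is more elementary relative to its own toolkit. One shared caveat: both arguments conclude $\mathds{1}\notin\V$ from $\omega(\mathds{1})=1$, which tacitly assumes $\S_D\neq\emptyset$; you at least flag this hypothesis explicitly.
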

\begin{proof}
 Let $a,b \in \V$. From (\ref{associator}) it follows:
\begin{equation}\label{boh}
 (a\circ b)^2 = \hbar^2\ [b,[a\circ b, a]\,] + a \circ (b \circ (a \circ b)\,).
\end{equation}
If we introduce $c = [a\circ b, a]$ and $d = b \circ (a \circ b)$, Eq. (\ref{boh}) becomes:
\begin{equation}\label{boh2}
 (a\circ b)^2 = \hbar^2\ [b,c\,] + (a \circ d\,).
\end{equation}
 From the inequalities (\ref{CSJ})(\ref{CSL}) it is easy to show that if $\omega(a^2) = 0$ then
\begin{equation}\label{boh3}
 \omega(a\circ b) = 0 = \omega([a,b]) \quad \forall\, b \in \L.
\end{equation}
Then if we apply the state $\omega$ to the expression (\ref{boh2}), from (\ref{boh3}) it follows:
\begin{equation}
 \omega(\,(a\circ b)^2\,) = \hbar^2\ \omega([b,c\,]) + \omega(a \circ d\,) = 0.
\end{equation}
By definition of $\V$, this means that $\forall\, a,b \in \V$, $a \circ b \in \V$.\\
By applying the state $\omega$ to the relation
\begin{equation}
  (a\circ b)^2 - \hbar^2\ [a,b]^2 = a \circ (b \circ (a \circ b)) - \hbar^2\ a \circ [b,[a,b]\,],
\end{equation}
we obtain $\omega(\,[a,b]^2\,) = \omega((a\circ b)^2) = 0$, that is 
$[a, b] \in \V$, $\forall\, a,b \in \V$. 
Hence $\V$ is a Lie--Jordan subalgebra and by definition of state $\omega(\mathds{1}) = 1$, hence $\mathds{1} \notin \V$.\\
$\V$ also inherits the Banach structure since it is defined as the intersection of closed subspaces.
\end{proof}

We can use the vanishing subalgebra to give an alternative useful description of 
the Dirac states.
\begin{proposition}\label{newDirac} With the previous definitions we have
\begin{equation} 
\S_D=\{\omega\in\S(\L) \mid \omega(a)=0,\quad \forall\, a\in\V\}
\end{equation}
\end{proposition}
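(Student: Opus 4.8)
The plan is to prove the two inclusions separately, each of which is an immediate consequence of a result already available: the Cauchy--Schwarz inequality (\ref{CSJ}) for one direction, and the fact established in the previous proposition that $\V$ is a Lie--Jordan subalgebra for the other.

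For the inclusion $\S_D \subseteq \{\omega \in \S(\L) \mid \omega(a) = 0\ \forall\, a \in \V\}$, I would take $\omega \in \S_D$ and $a \in \V$. By the very definition of $\V$ one has $\omega'(a^2) = 0$ for every Dirac state $\omega'$, in particular for $\omega$ itself, so $\omega(a^2) = 0$. Applying (\ref{CSJ}) with $b = \mathds{1}$ then gives $\omega(a)^2 = \omega(a \circ \mathds{1})^2 \leq \omega(a^2)\,\omega(\mathds{1}) = 0$, hence $\omega(a) = 0$. Thus every Dirac state annihilates $\V$.

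For the reverse inclusion, the first observation is that $\mathcal C \subseteq \V$: indeed, if $c \in \mathcal C$ then $\omega(c^2) = 0$ for every $\omega \in \S_D$ by the definition of the Dirac states, which is exactly the membership condition for $c \in \V$. Now let $\omega \in \S(\L)$ satisfy $\omega(a) = 0$ for all $a \in \V$, and fix $c \in \mathcal C$. Since $c \in \V$ and $\V$ is a subalgebra, $c^2 = c \circ c \in \V$, so $\omega(c^2) = 0$; as $c$ was arbitrary this means $\omega \in \S_D$. (If $\S_D$ happens to be empty, then $\V = \L \ni \mathds{1}$ and both sets are empty, so the equality still holds; one normally assumes $\mathcal C$ admissible so that $\S_D \neq \emptyset$, consistent with $\mathds{1} \notin \V$ from the previous proposition.)

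There is no real obstacle here: the only point needing attention is the interplay between the two types of condition, namely that the subalgebra property of $\V$ is precisely what promotes the linear condition $\omega|_\V = 0$ to the quadratic Dirac condition $\omega(c^2) = 0$, while Cauchy--Schwarz against the unit performs the opposite reduction. Once these two remarks are made, the proof is essentially a two-line verification in each direction.
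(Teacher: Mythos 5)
Your proof is correct and follows essentially the same route as the paper: the inclusion $\S_D\subseteq\{\omega\mid\omega|_{\V}=0\}$ via the Cauchy--Schwarz inequality $\omega(a)^2\leq\omega(a^2)$, and the reverse inclusion from the facts that $\mathcal C\subseteq\V$ and $\V$ is a subalgebra, so $c^2\in\V$. Your extra remarks (taking $b=\mathds{1}$ in (\ref{CSJ}) explicitly, and the empty-$\S_D$ edge case) only spell out details the paper leaves implicit.
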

\begin{proof}
As $\V$ is a subalgebra and it contains $\mathcal C$ it is clear that
the right hand side is included into $\S_D$. 

To see the other inclusion it is enough to consider that 
for any state $\omega(a)^2\leq\omega(a^2)$, therefore any Dirac state
should vanish  on $\V$.
\end{proof}

Define now the Lie normalizer as
\begin{equation}
\mathcal \Nv = \{\,a \in \L \mid \left[a,\V\right] \subset \V\,\}
\end{equation}
which corresponds roughly to Dirac's concept of ``first class variables'' \cite{Dirac}.

\begin{proposition}
 $\Nv$ is a unital \LJB and $\V$ is a Lie--Jordan ideal of $\Nv$.
\end{proposition}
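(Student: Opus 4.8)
The plan is to verify three things in turn: that $\Nv$ is closed under both products (hence a subalgebra), that it contains $\mathds{1}$ and inherits the Banach structure, and that $\V \subset \Nv$ is a Lie--Jordan ideal in the appropriate sense. For the ideal statement one must be careful about what ``ideal'' means here: since $\V$ has no unit and the Jordan product is not associative, the natural notion is that $\V$ is a Lie ideal of $\Nv$ (i.e. $[\Nv,\V]\subset\V$, which is immediate from the definition of $\Nv$) and a Jordan ideal of $\Nv$ (i.e. $\Nv\circ\V\subset\V$). The Lie part is built into the definition, so the content is the Jordan part together with the closure of $\Nv$ under the two products.

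First I would show $\Nv$ is a Jordan subalgebra. Given $a,b\in\Nv$, I need $[a\circ b,v]\in\V$ for all $v\in\V$. Applying the Leibniz identity (\ref{leibniz}) gives $[a\circ b,v] = [v,a\circ b]^{-} \cdots$; more precisely $[a\circ b, v] = -[v,a\circ b] = -\big([v,a]\circ b + a\circ[v,b]\big) = [a,v]\circ b + a\circ [b,v]$. Since $[a,v]\in\V$ and $[b,v]\in\V$ and $\V$ is a Jordan subalgebra containing products with arbitrary elements of $\L$ only when one factor is in $\V$ — here $[a,v]\circ b$ has the factor $[a,v]\in\V$, and by (\ref{boh3}) or the subalgebra argument in the previous proposition, $\V$ absorbs Jordan products with any element of $\L$ when one factor lies in $\V$ — we get $[a\circ b,v]\in\V$. (Indeed the proof that $\V$ is a subalgebra already used $\omega(a\circ b)=0$ whenever $\omega(a^2)=0$, i.e. $\V\circ\L\subset\V$; I would quote this.) For the Lie product of two normalizer elements, I use the Jacobi identity (\ref{jacobi}): for $a,b\in\Nv$ and $v\in\V$, $[[a,b],v] = [[a,v],b] - [[b,v],a]$, and each of $[a,v],[b,v]$ lies in $\V$, so by the normalizer property applied to $a$ and $b$ the two terms lie in $\V$. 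Hence $[a,b]\in\Nv$, so $\Nv$ is a Lie--Jordan subalgebra. That $\mathds{1}\in\Nv$ is clear since $[\mathds{1},\V]=\{0\}\subset\V$, and that $\Nv$ is norm-closed follows because it is the intersection over $v\in\V$ of the preimages $\{a : [a,v]\in\V\}$ of the closed set $\V$ under the bounded linear map $[\,\cdot\,,v]$ (bounded by axiom~2 of the LJB definition), so $\Nv$ inherits completeness; thus $\Nv$ is a unital \LJB.

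Finally, for the ideal claim: $\V\subset\Nv$ because $\V$ is a Lie subalgebra, so $[\V,\V]\subset\V$. The Lie-ideal property $[\Nv,\V]\subset\V$ is literally the defining condition of $\Nv$. The Jordan-ideal property $\Nv\circ\V\subset\V$ follows from $\V\circ\L\subset\V$ established in the previous proposition (since $\Nv\subset\L$). So $\V$ is simultaneously a Jordan ideal and a Lie ideal of $\Nv$, which is what ``Lie--Jordan ideal'' should mean. The main obstacle I anticipate is purely bookkeeping: making sure that the non-associativity of $\circ$ does not break the manipulation $[a\circ b,v] = [a,v]\circ b + a\circ[b,v]$ — but this is exactly the Leibniz identity and needs no associativity — and being precise that ``$\V$ absorbs Jordan products with $\L$'', which was shown en route in the previous proposition via the Cauchy--Schwarz inequalities, is the fact being invoked. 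No step requires new ideas beyond the identities (\ref{jacobi}), (\ref{leibniz}) and the already-proven closure properties of $\V$.
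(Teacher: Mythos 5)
There is a genuine gap, and it sits at the center of your argument: the claim that ``$\V$ absorbs Jordan products with arbitrary elements of $\L$'', i.e. $\V\circ\L\subset\V$. What the previous proposition actually established, via (\ref{boh3}), is only that $\omega(v\circ b)=0$ for every Dirac state $\omega$, every $v\in\V$ and every $b\in\L$. Membership of $v\circ b$ in $\V$ requires the much stronger statement $\omega\bigl((v\circ b)^2\bigr)=0$, and vanishing of $\omega(x)$ on all Dirac states does not imply vanishing of $\omega(x^2)$ (Prop.~\ref{newDirac} goes only in the other direction). In fact $\V\circ\L\subset\V$ is false in general: in the paper's own example $\V$ consists of the self-adjoint operators annihilating $\mathcal H_l$, and if $\hat v=|f\rangle\langle f|$ with $f\perp\mathcal H_l$ and $\hat b=|e\rangle\langle f|+|f\rangle\langle e|$ with $e\in\mathcal H_l$, then $(\hat v\circ\hat b)\,e=\frac12 f\neq 0$, so $\hat v\circ\hat b\notin\V$ even though $\hat v\in\V$. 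Since you invoke this absorption property twice --- once to conclude $[a,v]\circ b,\ a\circ[b,v]\in\V$ in the Leibniz computation for $[a\circ b,v]$, and once for the Jordan--ideal claim $\Nv\circ\V\subset\V$ itself --- both the closure of $\Nv$ under $\circ$ and the ideal property remain unproved in your proposal.

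The missing idea is precisely the computation the paper performs: prove directly that $\omega\bigl((v\circ a)^2\bigr)=0$ for $v\in\V$, $a\in\Nv$ and $\omega$ a Dirac state, by using the associator identity (\ref{associator}) to write $(v\circ a)^2=\hbar^2\,[a,[v\circ a,v]]+v\circ\bigl(a\circ(a\circ v)\bigr)$ and then killing both terms with the Cauchy--Schwarz inequalities (\ref{CSJ}) and (\ref{CSL}); after a Jacobi rearrangement, disposing of $\omega([a,[v\circ a,v]])$ uses $[v,a]\in\V$, i.e. exactly the normalizer hypothesis $a\in\Nv$ --- which is why absorption by all of $\L$ cannot be expected and why the Jordan--ideal step must come first. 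Once $\Nv\circ\V\subset\V$ is established this way, your Leibniz argument for $[a\circ b,v]\in\V$ (with $a,b\in\Nv$) goes through verbatim, and the remaining points of your proposal --- the Jacobi argument for closure under the bracket, $\mathds{1}\in\Nv$, and norm-closedness of $\Nv$ as an intersection of preimages of the closed set $\V$ under the bounded maps $[\,\cdot\,,v]$ --- are correct, the last being spelled out more explicitly than in the paper.
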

\begin{proof}
 Let $a,b \in \Nv$ and $v \in \V$. Then by definition of normalizer it immediately follows:
$$[[a,b],v] = [[a,v],b] + [[v,b],a] \in \V.$$
Let us now prove that $\forall\, v \in \V,\ v \circ a \in \V$, this is $\V$ is a Jordan ideal of $\Nv$:
$$\omega((v\circ a)^2) = \hbar^2\ \omega([a,[v\circ a,v]]) + \omega(v\circ(a\circ(a\circ v)))$$
which gives zero by repeated use of properties (\ref{CSJ}) and (\ref{CSL}).\\
Then it becomes easy to prove that $\Nv$ is a Jordan subalgebra:
$$[a\circ b, v] = [a,v] \circ b + a \circ [b,v] \in \V.$$
Finally, since the Lie bracket is continuous with respect to the Banach structure, it also follows that $\Nv$ inherits the Banach structure by completeness.
\end{proof}

In the spirit of Dirac, the physical algebra of observables in the presence of the constraint set $\mathcal C$ is represented by the \LJB $\Nv$ which can be reduced by the closed Lie--Jordan ideal $\V$ which induces a canonical Lie--Jordan algebra structure in the quotient:
\begin{equation}
\t\L = \bigslant{\Nv}{\V}.
\end{equation}
We will denote in the following the elements of $\t\L$ by $\t a$.\\
The quotient Lie-Jordan algebra $\t\L$ carries the quotient norm, 
\begin{equation}
 \|\t a\| = \| \left[a\right]\| = \displaystyle\inf\limits_{b \in \V} \|a +b \|  ,
\end{equation}
where $a \in \Nv$ is an element of the equivalence class $\left[a\right]$ of $\Nv$ with respect to the ideal $\V$.
The quotient norm provides a \LJB structure to $\t\L$.

\begin{example}
 Consider again the quantum reduction of the free particle constrained with a fixed value of the angular momentum. The constraint operator is expressed by $c = \hat L^2 - \hbar^2 l(l+1)\mathds{1}$. The set $\mathcal H_l$ of Dirac states is given by the convex span of the states $\{\, |m,l\rangle\langle m,l|,\ m = 0,\pm 1, \ldots, \pm l\,\}$ and the vanishing subalgebra $\V$ consists of the operators vanishing on $\mathcal H_l$: 
$$ \V = \{\, \hat a \in \L \mid \left.\hat a\right|_{\mathcal H_l} = 0\,\}.$$ It follows that the Lie normalizer $\Nv$ is given by the operators preserving $\mathcal H_l$:
$$\Nv = \{\, \hat a \in \L \mid \hat a (\mathcal H_l) \subset \mathcal H_l\,\},$$
and the reduced algebra $\t\L = \Nv / \V = \{\, \t{\hat a} \colon \mathcal H_l \to \mathcal H_l \,\}$, which gives rise to a reduced nonlinear dynamics out of an unreduced linear one.
\end{example}

\subsection{The space of states of the reduced \LJB}\label{reduced_states}

The purpose of this last section is to discuss the structure of the space of states of the reduced \LJB with respect to the space of states of the unreduced one.

Let $\L$ be a $\LJB$ and $\V$ its vanishing subalgebra with respect to a constraint set $\mathcal C$ and $\mathcal{N}_{\V}$ the Lie normalizer of $\V$. Then we will denote as before by $\t{\L}$ the reduced Lie--Jordan Banach algebra $\Nv/ \V$ and its elements by $\t a$.

Let $\t{\S} = \S(\t{\L})$ be the state space of the reduced \LJB $\t{\L}$, i.e. $\t{\omega} \in \t{\S}$ means that $\t{\omega}(\t a^2) \geq 0 \ \forall\,\t a \in \t{\L}$, and $\t{\omega}$ is normalized. Notice that if $\L$ is unital, then $\mathds{1} \in \Nv$ and $\mathds{1} + \V$ is the unit element of $\t{\L}$. We will denote it by $\t{\mathcal{\mathds{1}}}$.\\ We have the following:

\begin{lemma}\label{lemma1}
 There is a one-to-one correspondence between normalized positive linear functionals on $\t{\L}$ and normalized positive linear functionals on $\Nv$ vanishing on $\V$.
\end{lemma}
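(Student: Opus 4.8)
The plan is to construct the correspondence explicitly in both directions and check it is well-defined and bijective. Given a state $\t\omega$ on $\t\L$, define a functional $\omega$ on $\Nv$ by $\omega(a) = \t\omega(\t a)$, where $\t a = a + \V$ is the class of $a$ in the quotient. This is linear because the projection $\Nv \to \t\L$ is linear; it is positive because $\t\omega(\t a^2) = \t\omega(\widetilde{a^2}) \geq 0$, using that the quotient map is a Jordan homomorphism so $\widetilde{a^2} = \t a^2$; it is normalized because $\mathds{1} + \V = \t{\mathds{1}}$ is the unit of $\t\L$ and $\t\omega(\t{\mathds 1}) = 1$; and it clearly vanishes on $\V$ since elements of $\V$ map to $\t 0$.

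Conversely, given a normalized positive linear functional $\omega$ on $\Nv$ with $\omega|_\V = 0$, define $\t\omega$ on $\t\L$ by $\t\omega(\t a) = \omega(a)$ for any representative $a \in \Nv$ of the class $\t a$. The key point to verify here is that this is well-defined: if $a - a' \in \V$, then $\omega(a) - \omega(a') = \omega(a - a') = 0$ by hypothesis, so the value does not depend on the choice of representative. Linearity is then inherited, normalization follows from $\t\omega(\t{\mathds 1}) = \omega(\mathds 1) = 1$, and positivity from $\t\omega(\t a^2) = \t\omega(\widetilde{a^2}) = \omega(a^2) \geq 0$.

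Finally I would observe that these two constructions are mutually inverse — starting from $\t\omega$, passing to $\omega$ and back returns $\t\omega$ since $\t\omega(\t a) = \omega(a) = \t\omega(\t a)$, and similarly in the other direction — which establishes the bijection. I do not expect a serious obstacle here: the only thing that genuinely needs care is the well-definedness of the pullback to the quotient, which is exactly where the hypothesis $\omega|_\V = 0$ is used, together with the fact (already recorded in the excerpt) that $\V$ is a Lie--Jordan ideal of $\Nv$ so that the quotient $\t\L$ is a Lie--Jordan algebra and the projection respects both products. One should also note in passing that this correspondence is an affine homeomorphism for the relevant weak-$*$ topologies, since both maps are restrictions/pullbacks along continuous linear maps, though for the statement as phrased only the set-theoretic bijection is required.
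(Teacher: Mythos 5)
Your proposal is correct and follows essentially the same route as the paper: both directions are handled by descending/pulling back along the quotient map $\Nv \to \t\L = \Nv/\V$, with well-definedness coming from $\omega|_{\V}=0$ and positivity from the fact that every positive element of $\t\L$ is of the form $a^2+\V$ with $a\in\Nv$ (the paper phrases this as $\mathcal K^+_{\t\L}=\mathcal K^+_{\Nv}+\V$, you phrase it via the quotient map being a Jordan homomorphism). Your explicit check that the two constructions are mutually inverse is a detail the paper leaves implicit, but there is no substantive difference.
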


\begin{proof}
 Let $\omega'\colon \Nv\to \mathbb{R}$ be positive. 
The positive cone on $\t{\L}$ consists of elements of the form $\t a^2 = (a + \V)^2 = a^2 + \V$, i.e.
\begin{equation}
 \mathcal K^+_{\t{\L}} = \{\,a^2 + \V \mid a \in \Nv\,\} = \mathcal K^+_{\Nv} + \V.
\end{equation}
Thus if $\omega'$ is positive on $\Nv$, $\omega'(a^2)\geq 0$, hence:
\begin{equation}
 \omega'(a^2 + \V) = \omega'(a^2) + \omega'(\V)
\end{equation}
 and if $\omega'$ vanishes on the closed ideal $\V$, then $\omega'$ induces a positive linear functional on $\t{\L}$.  Clearly $\omega'$ is normalized then the induced functional is normalized too because $\t{\mathcal{\mathds{1}}} = \mathds{1} + \V$. 

Conversely, if $\t{\omega}\colon\t{\L} \to \mathbb{R}$ is positive and we define
\begin{equation}
 \omega'(a) = \t{\omega}(a + \V)
\end{equation}
then $\omega'$ is well-defined, positive, normalized and $\left.\omega'\right|_{\V}=0$.
\qedhere
\end{proof}

Notice also that given a positive linear functional on $\Nv$ there exists an extension of it to $\L$ which is positive too.

\begin{lemma}\label{lemma2}
 Given a closed Jordan subalgebra $\Z$ of a \LJB $\L$ such that $\mathds{1} \in \Z$ and $\omega'$ is a normalized positive linear functional on $\Z$, then there exists $\omega\in\S({\L})$ such that $\omega(a) = \omega'(a), \ \forall\,a \in \Z$.
\end{lemma}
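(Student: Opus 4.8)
The plan is to obtain $\omega$ as a norm-preserving Hahn--Banach extension of $\omega'$, using the order unit structure recorded in Section~3: a \LJB is a complete order unit space whose Banach norm coincides with the order unit norm $\|a\| = \inf\{\lambda > 0 \mid \lambda\mathds{1} - a,\ \lambda\mathds{1} + a \in \L^+\}$, a consequence of the spectral properties proved there together with the Jordan-algebraic facts of \cite{FFIM13}.

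First I would verify that $(\Z,\mathds{1})$ inherits this structure. Since $\Z$ is a closed Jordan subalgebra containing $\mathds{1}$, the continuous functional calculus applied to a positive element of $\Z$ keeps its square root inside $\Z$; hence $\{z^2 \mid z \in \Z\} = \Z \cap \L^+$, the order of $\Z$ is exactly the one induced from $\L$, and the order unit norm of $\Z$ is the restriction of $\|\cdot\|$. A standard computation then bounds $\omega'$: if $\lambda\mathds{1} \pm z \in \L^+$ then, applying the positive functional $\omega'$, one gets $|\omega'(z)| \leq \lambda\, \omega'(\mathds{1})$, so $\|\omega'\| \leq \omega'(\mathds{1})$; combined with $\omega'(\mathds{1}) \leq \|\omega'\|\,\|\mathds{1}\| = \|\omega'\|$ this gives $\|\omega'\| = \omega'(\mathds{1}) = 1$.

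Next I would apply the Hahn--Banach theorem in its norm-preserving form to extend $\omega'$ from the closed subspace $\Z$ to a linear functional $\omega$ on $\L$ with $\|\omega\| = \|\omega'\| = 1$. Since $\omega(\mathds{1}) = \omega'(\mathds{1}) = 1 = \|\omega\|$, the positivity of $\omega$ follows from the general fact that a bounded functional on an order unit space attaining its norm at the unit is positive: for $a \in \L^+$, $a\neq 0$, one has $0 \leq a \leq \|a\|\mathds{1}$, hence $\|\mathds{1} - a/\|a\|\| \leq 1$ and
\begin{equation}
  \omega(\mathds{1}) - \frac{\omega(a)}{\|a\|} = \omega\!\left(\mathds{1} - \frac{a}{\|a\|}\right) \leq \|\omega\| = \omega(\mathds{1}),
\end{equation}
so $\omega(a) \geq 0$. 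Thus $\omega \in \S(\L)$ and $\omega|_{\Z} = \omega'$, as required.

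The only point requiring care is the first step — that the \LJB norm is the order unit norm and that this passes to the closed subalgebra $\Z$; both rest on the spectral and functional-calculus results of Section~3, after which the argument is the classical extension theorem for states on order unit spaces.
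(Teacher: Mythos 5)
Your proof is correct and takes essentially the same route as the paper: a norm-preserving Hahn--Banach extension of $\omega'$, followed by the order-unit-space fact that a bounded functional satisfying $\|\omega\|=\omega(\mathds{1})$ is positive. You simply spell out the steps the paper leaves implicit (that $\|\omega'\|=\omega'(\mathds{1})$ on $\Z$, via the compatibility of the cone of $\Z$ with $\L^+$, and the standard positivity criterion), so no genuinely different argument is involved.
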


\begin{proof}
  Since $\L$ is a JB--algebra, it is also a Banach space. Due to the Hahn--Banach extension theorem \cite{Conway}, there exists a continuous extension $\omega$ of $\omega'$, i.e. $\omega(a) = \omega'(a), \ \forall\,a \in \Z$, and moreover $\|\omega\|=\|\omega'\|$.

From the equality of norms and the fact that $\omega'$ is positive we have
$\|\omega\|=\omega'(\mathds{1})$,
but $\omega$ is an extension of $\omega'$ then $\|\omega\|=\omega(\mathds{1})$, 
which implies that $\omega$ is a positive functional and satisfies all the requirements
stated in the lemma.
\qedhere
\end{proof}

We can now prove the following:
\begin{theorem}
 The set $\S_D(\Nv)$ of Dirac states on $\L$ restricted to $\Nv$ is in one-to-one correspondence with the space of states of the reduced \LJB $\t\L$.
\end{theorem}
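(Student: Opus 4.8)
The plan is to reduce the statement to Lemma \ref{lemma1} by showing that the restrictions to $\Nv$ of the Dirac states on $\L$ are \emph{exactly} the normalized positive linear functionals on $\Nv$ that vanish on $\V$. Once this identification is in hand, Lemma \ref{lemma1} supplies the desired bijection, namely $\omega|_{\Nv}\leftrightarrow\t\omega$ with $\t\omega(\t a)=\omega(a)$ for $a\in\Nv$, between $\S_D(\Nv)$ and $\S(\t\L)=\S(\Nv/\V)$.

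First I would dispatch the easy inclusion. If $\omega\in\S_D$, then by Proposition \ref{newDirac} $\omega$ vanishes on all of $\V$; since $\V\subset\Nv$ and $\mathds{1}\in\Nv$, the restriction $\omega|_{\Nv}$ is a normalized positive linear functional on $\Nv$ vanishing on $\V$. Hence $\S_D(\Nv)$ is contained in the set of such functionals, and the correspondence $\omega|_{\Nv}\mapsto\t\omega$ is well defined precisely because of this vanishing; it is injective because $\t a$ ranges over the whole of $\t\L$.

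For the reverse inclusion, let $\omega'$ be a normalized positive linear functional on $\Nv$ with $\omega'|_{\V}=0$. Since $\Nv$ is a unital \LJB, it is in particular a closed Jordan subalgebra of $\L$ containing $\mathds{1}$, so Lemma \ref{lemma2} (Hahn--Banach extension) provides $\omega\in\S(\L)$ with $\omega|_{\Nv}=\omega'$. Because $\V\subset\Nv$ and $\omega|_{\Nv}=\omega'$ vanishes on $\V$, the extension $\omega$ vanishes on $\V$; by Proposition \ref{newDirac} this means $\omega\in\S_D$, and therefore $\omega'=\omega|_{\Nv}\in\S_D(\Nv)$. This establishes the equality of the two sets, and composing with the bijection of Lemma \ref{lemma1} finishes the proof.

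I expect the only delicate point to be this surjectivity half: checking that an arbitrary state of the reduced algebra lifts not merely to a state on $\Nv$ but to a genuine \emph{Dirac} state on the full algebra $\L$. This is exactly where the extension Lemma \ref{lemma2}, combined with the characterization of Dirac states as states vanishing on $\V$ (Proposition \ref{newDirac}), does the essential work; the rest of the argument is bookkeeping with the definitions of $\V$, $\Nv$ and the quotient $\t\L$.
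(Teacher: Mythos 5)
Your proof is correct and follows essentially the same route as the paper: Proposition \ref{newDirac} identifies Dirac states with states vanishing on $\V$, and Lemma \ref{lemma1} then gives the bijection with $\S(\t\L)$. You additionally spell out the surjectivity step—lifting a functional on $\Nv$ vanishing on $\V$ to a genuine Dirac state via the Hahn--Banach extension of Lemma \ref{lemma2}—which the paper's one-line proof leaves implicit but clearly intends, given the placement of that lemma.
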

\begin{proof}
In Prop. \ref{newDirac} we characterised the Dirac states as those that vanish
on $\V$. Combining this result with that of Lemma \ref{lemma1} the proof
follows. 
\qedhere
\end{proof}

\acknowledgments
 This work was partially supported by MEC grants FPA--2009-09638,
MTM2010-21186-C02-02, QUITEMAD programme and DGA-E24/2.  
G.M. would like to acknowledge the support provided by the Santander/UCIIIM Chair of Excellence programme 2011-2012. 
\vskip 1cm

\end{document}